%
%
%
%
%
\RequirePackage{fix-cm}
\documentclass{svjour3}                      
\smartqed  
\usepackage{graphicx}
%
%
\usepackage{amssymb}
\setcounter{tocdepth}{3}
\usepackage{comment}
\usepackage{enumitem}
\usepackage{lipsum}
\usepackage{ntheorem}
\usepackage{url}
%

\newtheorem{Property}{Property}

\usepackage{ntheorem}
\theorembodyfont{\upshape}
\newtheorem{Comment}{Comment}
%
%

\AtBeginDocument{%
  \setlength{\oddsidemargin}{\dimexpr(\paperwidth-\textwidth)/2-1in}%
  \setlength{\evensidemargin}{\oddsidemargin}%
  \setlength{\topmargin}{%
    \dimexpr(\paperheight-\textheight)/2-\headheight-\headsep-1in}%
}

\usepackage{amsmath}
\DeclareMathOperator*{\argmin}{arg\,min}

\begin{document}

\title{On the Complexity of the Geometric Median Problem with Outliers}

\titlerunning{Geometric Median with Outliers}        

\author{Vladimir Shenmaier\thanks{%
Supported by the program of fundamental scientific researches of the SB RAS, project 0314-2019-0014.
}}

\authorrunning{V.V. Shenmaier} 

\institute{%
V.V. Shenmaier \at
Sobolev Institute of Mathematics, 4 Koptyug avenue, 630090 Novosibirsk, Russia\\
\email{shenmaier@mail.ru}           
}

\date{}

\def\makeheadbox{}
\maketitle

\begin{abstract}
In the Geometric Median problem with outliers, we are given a finite set of points in $d$-dimensional real space and an integer $m$, the goal is to locate a new point in space (center) and choose $m$ of the input points to minimize the sum of the Euclidean distances from the center to the chosen points.
This problem can be solved ``almost exactly'' in polynomial time if $d$ is fixed and admits an approximation scheme PTAS in high dimensions.
However, the complexity of the problem was an open question.
We prove that, if the dimension of space is not fixed, Geometric Median with outliers is strongly NP-hard, does not admit approximation schemes FPTAS unless P$=$NP, and is W[$1$]-hard with respect to the parameter~$m$.
The proof is done by a reduction from the Independent Set problem.
Based on a similar reduction, we also get the NP-hardness of closely related geometric $2$-clustering problems in which it is required to partition a given set of points into two balanced clusters minimizing the cost of median clustering.
Finally, we study Geometric Median with outliers in $\ell_\infty$ space and prove the same complexity results as for the Euclidean problem.
\keywords{Geometric median \and Outlier detection \and Capacitated $1$-Median \and Single location problem \and Balanced clustering \and Complexity}
\subclass{
68Q25 
\and 90C27 
\and 90B85 
\and 6207 
}
\end{abstract}

\section{Introduction}
We give an answer to the open question on the complexity of the Geometric Median problem with outliers, which is an extension of the classical problem of finding the geometric median (``minisum point'') of given $n$ points in Euclidean space.

Let $\|\cdot\|$ denote values of the Euclidean norm.
The problem we consider is formulated as follows:\medskip

\noindent\textbf{Geometric Median with outliers.}
Given an $n$-element set $X$ in space $\mathbb R^d$ and an integer~$m$.
Find a center $c\in\mathbb R^d$ and an $m$-element subset $S\subseteq X$ to minimize the value of
$$cost(S,c)=\sum_{x\in S}\|x-c\|.$$

\begin{figure}
\centering
\includegraphics[scale=1]{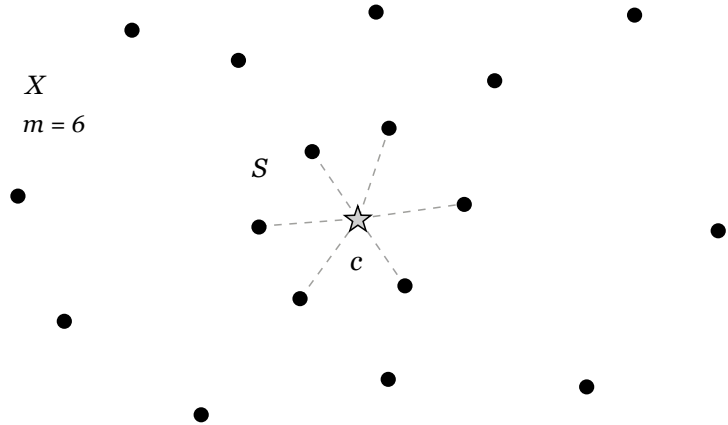}
\caption{An instance of Geometric Median with outliers and its solution}
\label{fig:Example}
\end{figure}

This problem may also be referred to as {\it Capacitated Euclidean $1$-Median\/}.
In fact, we need to find a center $c\in\mathbb R^d$ minimizing the total distance from $c$ to $m$ nearest input points (see Fig.~\ref{fig:Example}).
In an equivalent version, we need to find a center $c\in\mathbb R^d$ maximizing the cardinality of a subset $S\subseteq X$ for which the value of $cost(S,c)$ does not exceed a given upper bound.

As in the usual Geometric Median problem, where $m=n$, the input points are considered as clients or demand points, while the desired center, as a location for placing a facility to serve the clients.
The $n-m$ clients which are removed from this service in the solution are called {\it outliers\/}.
The problem with outliers arises naturally in the following situations:
\begin{itemize}[label={$\bullet$}]
\item The facility to be placed at the center we are looking for has a limited capacity and may not serve all the demand points.\smallskip
\item There exist an upper limit for the transportation cost, so we need to remove a minimum possible number of clients from the service to satisfy this limit.\smallskip
\item The data contains noise and errors.
In this case, a few most distant clients may exert a disproportionately strong influence over the final solution and correspond to the least robust input points.\smallskip
\item The discovered outliers do not fit the rest of the data and are worthy of further investigation.
In particular, once identified, they can be used to discover anomalies in the data.
\end{itemize}

Besides the practical considerations mentioned above, the problem is theoretically interesting.
Strictly speaking, no polynomial-time algorithms are known even for the problem without outliers, where we find the geometric median of the whole set~$X$.
However, one can say that the usual Geometric Median problem is polynomially solvable ``almost exactly'' since, e.g., the algorithm from~\cite{Cohen} computes its $(1+\varepsilon)$-approximate solution in time $\mathcal{O}\big(nd\log^3\frac{n}{\varepsilon}\big)$.

The problem with outliers seems to be cardinally harder due to the exponential number of the subsets $S\subseteq X$.
The first idea how to solve it is using the property that optimal subsets consist of $m$ input points nearest to some point in space.
It allows to find an optimal solution by enumerating the cells of the $m$-order Voronoi diagram for the set $X$ and calculating the geometric medians of the subsets corresponding to these cells.
So, by using the known algorithms for constructing high-order Voronoi diagrams \cite{ERS,ESS} and for finding geometric medians \cite{Cohen}, we compute a $(1+\varepsilon)$-approximate solution of the Geometric Median problem with outliers in time $\mathcal{O}\big(n^{d+1}md\log^3\frac{m}{\varepsilon}\big)$.
Thus, in the case of fixed $d$, we can solve the problem ``almost exactly'' in polynomial time.

If the dimension of space is not fixed, a more effective idea is to use the approximation algorithm from \cite{Shen2019,Shen2020} for the following general clustering problem, which contains Geometric Median with outliers:\medskip

\noindent\textbf{General Problem.}
Given points $x_1,\dots,x_n$ in space $\mathbb R^d$, integers $k,m\ge 1$, unit distance costs $f_{ij}\ge 0$, and powers $\alpha_{ij}\in[0,\alpha]$, $i=1,\dots,k$, $j=1,\dots,n$, where $\alpha$ is some constant.
Find disjoint subsets $S_1,\dots,S_k\subseteq\{1,\dots,n\}$ of total cardinality $m$ and select a tuple $c_1,\dots,c_k\in\mathbb R^d$ to minimize the value of
$$\sum_{i=1}^k\sum_{j\in S_i}f_{ij}\|x_j-c_i\|^{\alpha_{ij}}.$$

\noindent The framework suggested in \cite{Shen2019,Shen2020,Shen2021d} allows to find a $(1+\varepsilon)^\alpha$-approximate solution of General Problem in time $\mathcal{O}\big(n^{k\lceil\log(2/\varepsilon)/\varepsilon\rceil+1}kd\,\big)$ for any fixed $\varepsilon\in(0,1]$.
This framework is based on constructing a polynomial-cardinality set of points which approximate all the points of space with respect to the distances to the input points.
Thus, in our case, we have an approximation scheme PTAS with running time $\mathcal{O}\big(n^{\lceil\log(2/\varepsilon)/\varepsilon\rceil+1}d\,\big)$.\medskip

\noindent\textbf{Related problems.}
ElGindy and Keil~\cite{ElGindy} consider the equivalent version of the Geometric Median problem with outliers where it is required to find a maximum-cardinality subset of input points satisfying a given upper bound for the cost of the geometric median.
They suggest an $\mathcal{O}(n^{2.5}\log^4n)$-time exact algorithm for the two-dimensional case with $\ell_1$ or $\ell_\infty$ distances.

Two well-known single location problems closest to Geometric Median with outliers are Smallest $m$-Enclosing Ball and $m$-Variance.
The first consists of finding $m$ input points minimizing the radius of the ball enclosing these points.
In the second, we need to find $m$ input points minimizing the sum of the squared distances from these points to their mean.
In high dimensions, both problems are strongly NP-hard \mbox{\cite{Shen2013,Shen2015,KP}} but admit approximation schemes with running time $\mathcal{O}\big(n^{\lceil 1/\varepsilon\rceil}d\,\big)$ \cite{AHV2005} and $\mathcal{O}\big(n^{\lceil 2/\varepsilon\rceil+1}d\,\big)$ \cite{Shen2012} respectively.

Finally, note that the discrete version of $k$-Median with outliers, in which all the centers must be selected from a given finite set, and also $k$-Means with outliers, Uncapacitated Facility Location with outliers, and other similar problems were considered by Charikar et al. \cite{Charikar}, Chen \cite{Chen2008}, Cohen-Addad et al. \cite{CFS}, and Krishnaswamy et al. \cite{KLS}.
In particular, they suggest constant-factor algorithms for these problems in the case of non-fixed~$k$.\medskip

\noindent\textbf{Our contributions.}
Surprisingly, the complexity of Geometric Median with outliers was an open question.
We prove that, in high dimensions, this problem is strongly NP-hard, does not admit approximation schemes FPTAS unless P$=$NP, and is W[$1$]-hard with respect to the parameter~$m$.
The proof is based on a reduction from the Independent Set problem.
The main idea of this reduction is constructing instances in which the geometric median of any subset of points corresponding to vertices of a given graph is well approximated by the mean of these points, while the total distance from the points to their mean monotonously depends on the number of edges between the corresponding vertices.

By using a similar reduction from the problem of finding a maximum bisection in a $3$-regular graph, we get the NP-hardness and the non-existence of approximation schemes FPTAS for the following closely related problems:\medskip

\noindent\textbf{Equal-Size Geometric $2$-Median.}
Given an $n$-element set $X$ in space $\mathbb R^d$, where $n$ is even.
Find centers $x,y\in\mathbb R^d$ and a partition of $X$ into two equal-size subsets $S$ and $X\setminus S$ to minimize the value of $cost(S,x)+cost(X\setminus S,\,y)$.\medskip

\noindent\textbf{Balanced Geometric $2$-Median.}
Given an $n$-element set $X$ in space $\mathbb R^d$.
Find centers $x,y\in\mathbb R^d$ and a partition of $X$ into two subsets $S$ and $X\setminus S$ to minimize the value of $|S|\cdot cost(S,x)+|X\setminus S|\cdot cost(X\setminus S,\,y)$.\medskip

Additionally, we study the {\it $\ell_\infty$-Median problem with outliers\/}, which consists of finding a center minimizing the sum of the $\ell_\infty$ distances to $m$ input points.
For this problem, the same complexity results are proved as for the Euclidean case.

\section{Complexity of the Euclidean problem}
In this section, we prove the hardness of finding an optimum $m$-element subset in the case when the distances between input points and the desired center are defined by the Euclidean norm.

\subsection{Reduction from Independent Set} 
We construct a reduction to Geometric Median with outliers from the classic NP-hard problem of determining the existence of an independent set of a given cardinality in a general graph.

Let $G$ be any undirected graph on some $n$-element set of vertices~$V$.
First, construct the following auxiliary graph~$G'$.
The vertices of $G'$ are the same as in $G$ and the set of edges contains all the edges of $G$ and also $n-deg_v$ copies of the loop edge $(v,v)$ for each vertex $v\in V$, where $deg_v$ is the degree of $v$ in~$G$. 
So each vertex in $G'$ is incident to exactly $n$ edges (see Fig.~\ref{fig:F1}).
Then fix an arbitrary orientation on the edges of $G'$: for every non-loop edge, choose an endpoint of this edge which it is ``outgoing from'' and one which it is ``incoming to''.
For definiteness, we will assume that every loop edge is ``incoming to'' its endpoint.
Denote by $E$ the set of edges of~$G'$.
Next, map each vertex $v\in V$ to the point $x_v\in\mathbb R^{E\cup V}$ with the following coordinates: $x_v(v)=M$, where $M$ is some big integer which will be specified later; $x_v(e)=1$ for every edge $e\in E$ outgoing from $v$ and $x_v(e)=-1$ for ones incoming to $v$; all the other coordinates are zero (see Fig.~\ref{fig:F1}).
Finally, define the instance of the Geometric Median problem with outliers corresponding to the graph $G$ as the set $X=\{x_v\mid v\in V\}$.

\begin{figure}
\centering
\includegraphics[scale=1]{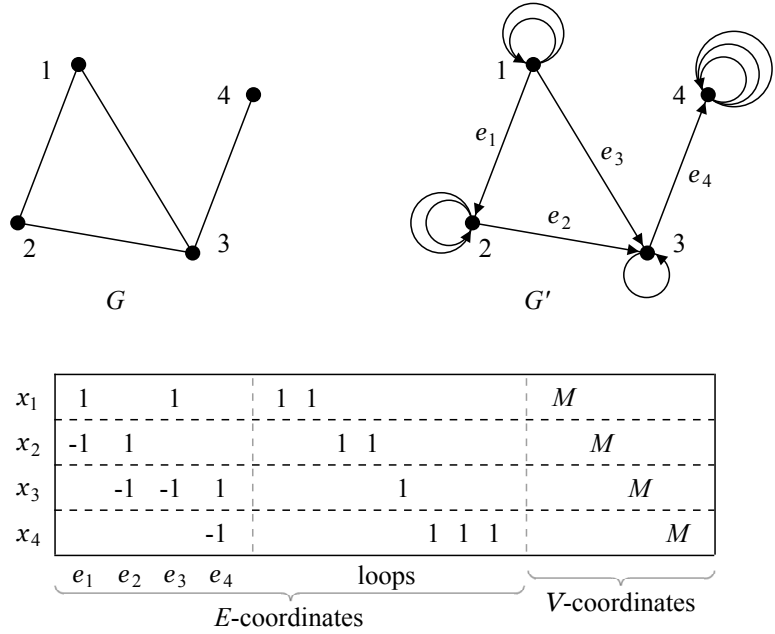}
\caption{The original graph $G$, the auxiliary graph $G'$, and the vectors $x_v$, $v\in V$}
\label{fig:F1}
\end{figure}

\begin{Comment}
Adding the loop edges and selecting an edge orientation in the auxiliary graph $G'$ provide that all the vectors $x_v$, $v\in V$, have the same number of the $\pm 1$ coordinates and, for any two vectors $x_v,x_u$, the sets of their non-zero coordinates are either disjoint or intersecting by the only coordinate $e\in E$, for which $x_v(e)=-x_u(e)$.
As a corollary, the distance between the vectors $x_v$ and $x_u$ for any adjacent different vertices $u,v\in V$ is greater than that for non-adjacent ones.
Indeed, $\|x_v-x_u\|^2=2M^2+\|x_v^E-x_u^E\|^2$, where $x^E$ denotes the projection of $x$ into space $\mathbb R^E$, while the value of $\|x_v^E-x_u^E\|^2$ is $2(n-1)+(1+1)^2=2n+2$ if the vertices $u$ and $v$ are adjacent and is $2n$ otherwise.
\end{Comment}

\begin{Comment}
Setting the coordinates $x_v(v)$, $v\in V$, equal to a big value of $M$ provides that the geometric median of any subset $Y\subseteq X$ is close to its mean, while, as it will be proved, the total distance from the points $x_v\in Y$ to their mean has an almost affine dependence on the sum of pairwise squared distances between the corresponding vectors~$x_v^E$.
\end{Comment}

\noindent\textbf{Idea of the reduction.}
We will prove that, given an $m$-element subset of vertices $S\subseteq V$, the minimum value of $cost(X_S,c)$, where $X_S=\{x_v\mid v\in S\}$, over all $c\in\mathbb R^d$ monotonously depends on the number $\ell_S$ of the edges in the original graph $G$ connecting the vertices of $S$ to each other.
Therefore, if the set $X_S$ is an optimal solution of Geometric Median with outliers for the given $m$, then $S$ is an $m$-element subset of vertices with the minimum number~$\ell_S$.\medskip

For any finite set $Y\subset\mathbb R^{E\cup V}$, denote by $c(Y)$ and $\mu(Y)$ the mean and the geometric median of this set respectively:
$$c(Y)=\frac{1}{|Y|}\sum_{y\in Y}y\quad\mbox{ and }\quad\mu(Y)=\argmin_{c\in\mathbb R^{E\cup V}}cost(Y,c).$$

Let $S\subseteq V$ be any subset of vertices with an arbitrary cardinality $m\ge 3$.
Put $y=\mu(X_S)-c(X_S)$, $\delta=\|y\|$, $z_v=x_v-c(X_S)$, and $\displaystyle\zeta_v=\sum_{e\in E}z_v^2(e)$ for each $v\in S$.
Then $\displaystyle\sum_{v\in S}z_v$ is the zero vector and the following property holds:

\begin{Property} 
$(a)$ $\displaystyle\zeta_v\in\Big[n-\frac{2n}{m}+\frac{2n}{m^2},\,n+\frac{n}{m}-\frac{2n}{m^2}\Big]\subset(0.55n,\,1.13n)$;\\
$(b)$ $\|z_v\|=\sqrt{A^2+\zeta_v}$, where $A=M\sqrt{1-1/m}$.
\end{Property}

\begin{proof}
(a) Given a vector $x\in\mathbb R^{E\cup V}$, let $E(x)=\{e\in E\mid x(e)\ne 0\}$.
Then it is easy to see that the set $E(c(X_S))$ consists of the loops in $S$ and the edges connecting $S$ and $V\setminus S$.
Hence, we have $|E(c(X_S))|=nm-2\ell_S$.
Next, the set $E(z_v)$ consists of all the elements of $E(c(X_S))$ and also the edges connecting $v$ and the other vertices of~$S$.
So $|E(z_v)|=nm-2\ell_S+\Delta^v_S$, where $\Delta^v_S$ is the degree of $v$ in the subgraph of $G$ induced by~$S$.
For $\Delta^v_S$ coordinates $e\in E(z_v)$, the values of $z_v(e)$ are $\pm 1$;
for $n-\Delta^v_S$ coordinates, these values are $\pm(1-1/m)$;
for the other coordinates from $E(z_v)$, these values are $\pm 1/m$.
Then
\begin{eqnarray*}
\zeta_v=\Delta^v_S+(n-\Delta^v_S)\Big(1-\frac{1}{m}\Big)^2+\frac{n(m-1)-2\ell_S+\Delta^v_S}{m^2}=\\
n\Big(1-\frac{1}{m}\Big)^2+\frac{2\Delta^v_S}{m}-\frac{\Delta^v_S}{m^2}+\frac{n}{m}-\frac{n}{m^2}-\frac{2\ell_S}{m^2}+\frac{\Delta^v_S}{m^2}=
n-\frac{n}{m}-\frac{2\ell_S}{m^2}+\frac{2\Delta^v_S}{m}.
\end{eqnarray*}
But $\ell_S\ge\Delta^v_S$ and $\Delta^v_S<n$,
so
$\displaystyle\zeta_v\le n-\frac{n}{m}-\frac{2\Delta^v_S}{m^2}+\frac{2\Delta^v_S}{m}<n+\frac{n}{m}-\frac{2n}{m^2}$.
The latter is maximized when $m=4$, therefore, we have $\zeta_v<9n/8<1.13n$.

On the other hand, we have $\ell_S=\Delta^v_S+\ell_{S\setminus\{v\}}$, $\displaystyle\ell_{S\setminus\{v\}}\le\frac{(m-1)(m-2)}{2}$, and $\Delta^v_S\ge 0$, so
\begin{eqnarray*}
\zeta_v=n-\frac{n}{m}-\frac{2\ell_S}{m^2}+\frac{2\Delta^v_S}{m}=n-\frac{n}{m}-\frac{2\ell_{S\setminus\{v\}}}{m^2}-\frac{2\Delta^v_S}{m^2}+\frac{2\Delta^v_S}{m}\ge\\
n-\frac{n}{m}-\frac{(m-1)(m-2)}{m^2}=n-1-\frac{n-3}{m}-\frac{2}{m^2}.
\end{eqnarray*}
But it can be easily proved that $\displaystyle n-1-\frac{n-3}{m}-\frac{2}{m^2}\ge n-\frac{2n}{m}+\frac{2n}{m^2}$ for all $n\ge m\ge 3$.
It follows that $\zeta_v\ge 5n/9>0.55n$.\medskip

\noindent(b) Obviously, we have $\displaystyle\|z_v\|^2=M^2\Big(1-\frac{1}{m}\Big)^2+(m-1)\Big(\frac{M}{m}\Big)^2+\zeta_v=A^2+\zeta_v$, which implies the required equation.
\hfill$\Box$
\end{proof}

The main geometric statement underlying the proposed reduction is that the distance between $c(X_S)$ and $\mu(X_S)$ is very close to zero for big $M$: $\displaystyle\delta<\frac{4.5n}{mA}$ if $A\ge 100nm$.
This statement will be proved in Section~2.2 (Lemma~\ref{medmean}).

\begin{lemma}\label{medcut}
Suppose that $A\ge 100nm$.
Then, for some $\gamma\in[-1,1]$, we have
$$cost(X_S,\mu(X_S))=mA+\frac{n(m-1)}{2A}+\frac{\ell_S}{mA}+\gamma\frac{2.3n^2m}{A^3}.$$
\end{lemma}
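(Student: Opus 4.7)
The plan is to expand each term of $cost(X_S,\mu(X_S))=\sum_{v\in S}\|z_v-y\|$ by Taylor-expanding the square root around $A$, collect the first-order sum using $\sum_{v\in S}z_v=0$, and control the quadratic remainder via the bound on $\delta$ provided by Lemma~\ref{medmean}.

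First, for each $v\in S$ set $u_v=\zeta_v-2\langle z_v,y\rangle+\delta^2$. Then Property~1(b) gives $\|z_v-y\|^2=\|z_v\|^2-2\langle z_v,y\rangle+\delta^2=A^2+u_v$, so $\|z_v-y\|=A\sqrt{1+u_v/A^2}$. The hypothesis $A\ge 100nm$, the bound $\zeta_v<1.13n$ from Property~1(a), the bound $\delta<4.5n/(mA)$ from Lemma~\ref{medmean}, and the Cauchy--Schwarz estimate $|\langle z_v,y\rangle|\le\|z_v\|\delta\le A(1+o(1))\delta$ together make $|u_v|/A^2$ minuscule, so the expansion $\sqrt{1+t}=1+t/2+r(t)$ is highly accurate. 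Summing and using $\sum_{v\in S}z_v=0$ to eliminate the linear-in-$y$ part of $\sum u_v$ gives
\[
cost(X_S,\mu(X_S))=mA+\frac{1}{2A}\Big(\sum_{v\in S}\zeta_v+m\delta^2\Big)+A\sum_{v\in S}r(u_v/A^2).
\]
A short computation in the spirit of the proof of Property~1(a), using $\sum_v\Delta^v_S=2\ell_S$, yields $\sum_{v\in S}\zeta_v=n(m-1)+2\ell_S/m$. Hence the first-order part equals $mA+\frac{n(m-1)}{2A}+\frac{\ell_S}{mA}+\frac{m\delta^2}{2A}$, matching the advertised main terms up to the perturbation $m\delta^2/(2A)$.

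It remains to bound the total deviation $E$ from the claimed main terms. On the upper side, concavity of $\sqrt{1+t}$ gives $r(t)\le 0$, so $E\le m\delta^2/(2A)\le 10.125\,n^2/(mA^3)$, which does not exceed $2.3\,n^2m/A^3$ once $m\ge 3$. On the lower side, since $|u_v|/A^2$ is negligible, Taylor's theorem yields $r(t)\ge -(t^2/8)(1+o(1))$, so $E\ge m\delta^2/(2A)-(1+o(1))\sum_{v\in S}u_v^2/(8A^3)$. Expanding $u_v^2$, the term-by-term estimates $\sum\zeta_v^2\le 1.13n\sum\zeta_v\le 1.3\,n^2m$; $\sum\langle z_v,y\rangle^2\le\delta^2\sum\|z_v\|^2\le 20.25\,n^2/m$; and $|\sum\zeta_v\langle z_v,y\rangle|\le 1.13n\,\delta\sum\|z_v\|\le 5.1\,n^2$ (the remaining pieces involving $\delta^2$ being negligible) yield $\sum u_v^2\le 18.4\,n^2m$ for all $m\ge 3$, whence $|E|\le 2.3\,n^2m/A^3$ and the lemma follows.

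The main obstacle is the tightness of constants in the quadratic remainder: the coefficient $2.3$ leaves a narrow margin, so the estimate must simultaneously exploit the cancellation $\sum_{v\in S}z_v=0$ (to kill the linear cross-terms of $\sum u_v^2$), the trivial graph bound $\ell_S\le m(m-1)/2$ (to control $\sum\zeta_v$), and the sharp form of Lemma~\ref{medmean} on $\delta$.
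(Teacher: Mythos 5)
Your proof is correct and follows essentially the same route as the paper's: write each term as $\sqrt{A^2+u_v}$ with $u_v=\zeta_v-2\langle y,z_v\rangle+\delta^2$, Taylor-expand, kill the linear cross-terms via $\sum_{v\in S}z_v=0$, compute $\sum_{v\in S}\zeta_v=n(m-1)+2\ell_S/m$, and absorb the $m\delta^2/(2A)$ term and the quadratic remainder into $\gamma\cdot 2.3n^2m/A^3$ using the bound $\delta<4.5n/(mA)$ from Lemma~\ref{medmean}. The only immaterial differences are that the paper obtains $\sum_{v\in S}\zeta_v$ from the pairwise-squared-distance identity rather than by summing the explicit formula from Property~1(a), and controls the remainder via the uniform bound $|u_v|<4.2n$ (giving $\sum u_v^2\le(4.2n)^2m$) instead of your term-by-term expansion of $\sum u_v^2$.
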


\begin{proof}
By the cosine theorem, we have $cost(X_S,\mu(X_S))=$
\begin{eqnarray*}
\sum_{v\in S}\sqrt{\|z_v\|^2+\|y\|^2-2\langle y,z_v\rangle}=\sum_{v\in S}\sqrt{A^2+\zeta_v+\delta^2-2\langle y,z_v\rangle},
\end{eqnarray*}
where $\langle\cdot\,,\cdot\rangle$ is the dot product.
Property~1, Lemma~\ref{medmean}, and the condition for~$A$ yield that
$$\zeta_v+\delta^2-2\langle y,z_v\rangle\le 1.13n+\frac{(4.5n)^2}{(mA)^2}+2\frac{4.5n}{mA}\sqrt{A^2+1.13n}<4.2n,$$
so $\displaystyle\Big|\frac{\zeta_v+\delta^2-2\langle y,z_v\rangle}{A^2}\Big|<\frac{4.2n}{A^2}<0.001$.
On the other hand, Taylor's theorem (in the Lagrange remainder form) implies the equation
\begin{eqnarray*}
\sqrt{1+\varepsilon}=1+\frac{\varepsilon}{2}-\theta\frac{\varepsilon^2}{7.99}\mbox{ for some }\theta\in[0,1]\mbox{ if }|\varepsilon|\le 0.001.
\end{eqnarray*}
Therefore,
\begin{eqnarray*}
cost(X_S,\mu(X_S))=
\sum_{v\in S}A\Big(1+\frac{\zeta_v+\delta^2-2\langle y,z_v\rangle}{2A^2}-\theta_v\frac{(\zeta_v+\delta^2-2\langle y,z_v\rangle)^2}{7.99A^4}\Big),
\end{eqnarray*}
where $\theta_v\in[0,1]$.
Since $\displaystyle\sum_{v\in S}z_v$ is the zero vector, the sum of the terms $\langle y,z_v\rangle$ is zero.
Then, taking into account the inequality $\displaystyle\delta<\frac{4.5n}{mA}$ and the above observations, we obtain that $cost(X_S,\mu(X_S))=$
\begin{eqnarray*}
\sum_{v\in S}\Big(A+\frac{\zeta_v}{2A}\Big)+\theta_1\frac{(4.5n)^2}{2mA^3}-\theta_2\frac{(4.2n)^2m}{7.99A^3}=
mA+\sum_{v\in S}\frac{\zeta_v}{2A}+\gamma\frac{2.3n^2m}{A^3},
\end{eqnarray*}
where $\theta_1,\theta_2\in[0,1]$ and $\gamma\in[-1,1]$.

Next, calculate the value of $\displaystyle\sum_{v\in S}\zeta_v$.
Given a vertex $v\in S$, consider the projection $x^E_v\in\mathbb R^E$ of the vector $x_v$ into space $\mathbb R^E$.
It is easy to see that the value of $\zeta_v$ equals the squared distance from the vector $x^E_v$ to the mean of these vectors.
But it is well-known (e.g., see \cite{Inaba,Shen2012}) that the sum of such squared distances equals the sum of all the pairwise squared distances divided by $2m$:
$$\sum_{v\in S}\zeta_v=\frac{1}{2m}\sum_{v\in S}\sum_{u\in S}\|x^E_v-x^E_u\|^2.$$
At the same time, by the construction of the vectors $x^E_v$, each pairwise squared distance $\|x^E_v-x^E_u\|^2$ equals $2n+2$ if the vertices $v$ and $u$ are adjacent and equals $2n$ otherwise (see Comment~1).
So we have
$$\sum_{v\in S}\zeta_v=\frac{2n(m^2-m)+4\ell_S}{2m}=n(m-1)+\frac{2\ell_S}{m}.$$
It follows the required equation.
The lemma is proved.
\hfill$\Box$
\end{proof}

\begin{theorem}\label{th1}
Geometric Median with outliers is strongly NP-hard, does not admit approximation schemes FPTAS unless P$=$NP, and is W$[1]$-hard with respect to the parameter~$m$.
\end{theorem}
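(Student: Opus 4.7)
The plan is to reduce from Independent Set using exactly the construction described above. Given a graph $G$ on $n$ vertices and a target size $k$, set $m = k$, build $X = \{x_v : v \in V\}$, and choose $M$ just large enough that $A = M\sqrt{1-1/m} \ge 100nm$, for instance $M = \lceil 200nm \rceil$. Since every $m$-element subset of $X$ is of the form $X_S$ for some $m$-set $S \subseteq V$, and since for a fixed subset the optimal center is its geometric median, the minimum of the objective over all admissible $(S, c)$ equals $\min_{|S|=m} cost(X_S, \mu(X_S))$.

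I would then invoke Lemma~\ref{medcut}, which rewrites this minimum as a fixed $m$-dependent offset plus $\frac{\ell_S}{mA}$ up to an additive error of $\frac{2.3n^2 m}{A^3}$. The core gap argument is that consecutive integer values of $\ell_S$ change the variable term by $\frac{1}{mA}$, while the error contributes at most $\frac{4.6n^2 m}{A^3}$, and the choice $A \ge 100nm$ ensures $\frac{1}{mA} > \frac{4.6n^2 m}{A^3}$. Consequently the optimum cost is strictly monotone in $\ell_S$ and is attained on the subsets $S$ minimizing $\ell_S$. In particular, $G$ contains an independent set of size $m$ if and only if the optimum lies below the threshold $mA + \frac{n(m-1)}{2A} + \frac{1}{2mA}$.

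From this equivalence the three complexity claims drop out. All coordinates, the dimension $d = |V| + |E| = O(n^2)$, the cardinality of $X$, and the magnitude $M = O(n^2)$ are polynomial in $n$, so the reduction is strongly polynomial and hence transfers NP-hardness in the strong sense. The absolute yes/no gap in the objective is $\Omega(1/(mA))$ against a total cost of order $\Theta(mA)$, giving a relative gap $\Omega(1/(m^2 A^2)) = \Omega(1/\text{poly}(n))$; an FPTAS invoked with $\varepsilon$ smaller than this polynomial threshold would decide Independent Set in polynomial time, contradicting $\mathrm{P} \neq \mathrm{NP}$. Finally, the produced parameter $m$ equals the source parameter $k$ of Independent Set and the reduction runs in polynomial time, so it is a parameterized reduction, and W[$1$]-hardness transfers.

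The main obstacle is the gap argument itself: one must confirm, with the precise constants of Lemma~\ref{medcut}, that the choice $A \ge 100nm$ forces the additive error term to be strictly smaller than half of the $\frac{1}{mA}$ spacing between distinct integer values of $\ell_S$, and that the corresponding yes/no threshold is separated from both sides. Once this numerical inequality is pinned down, the three complexity consequences become routine bookkeeping on the input size together with a direct unpacking of the definitions of strong NP-hardness, FPTAS, and W[$1$]-hardness.
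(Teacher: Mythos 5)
Your proposal is correct and follows essentially the same route as the paper: the same reduction from Independent Set via the set $X=\{x_v\}$, the same invocation of Lemma~\ref{medcut}, the same gap argument comparing the $\frac{1}{mA}$ spacing between consecutive values of $\ell_S$ against the $\frac{2.3n^2m}{A^3}$ error term under $A\ge 100nm$, and the same three bookkeeping steps for strong NP-hardness, FPTAS exclusion, and W[$1$]-hardness (the paper picks $M=123nm$ rather than $\lceil 200nm\rceil$, which is immaterial). The numerical inequality you flag as the main obstacle does check out: $\frac{4.6n^2m}{A^3}\cdot mA=\frac{4.6n^2m^2}{A^2}\le 4.6\cdot 10^{-4}<1$ for $A\ge 100nm$, exactly as the paper verifies.
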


\begin{proof}
Suppose that $G$ is any $n$-vertex graph, $S$ is any $m$-element subset of its vertices, where $m\ge 3$, and put $M=123nm$.
Then $A\ge M\sqrt{2/3}>100nm$ and, by Lemma~\ref{medcut}, we have the equation $cost(X_S,\mu(X_S))=f(n,m,\ell_S,\gamma)$, where
$$f(n,m,\ell,\gamma)=mA+\frac{n(m-1)}{2A}+\frac{\ell}{mA}+\gamma\frac{2.3n^2m}{A^3}$$
and $\gamma\in[-1,1]$.
By the choice of $M$, the absolute value of the latest term in this expression is at most $0.00023$ times of $\displaystyle\frac{1}{mA}$.
Therefore, $S$ is an independent set in the graph $G$ if and only if $cost(X_S,\mu(X_S))<f(n,m,1,-1)$.
Thus, the Independent Set problem is reduced to Geometric Median with outliers.
Taking into account that $M$ is an integer bounded by a polynomial in the length of the input, it gives the strong NP-hardness of the latter problem.

Moreover, by the above, if $\ell\ge 1$, then
$$f(n,m,\ell,-1)-f(n,m,0,1)\ge\frac{1-2\cdot 0.00023}{mA}>\frac{0.999}{123n^3}>\frac{0.008}{n^3}.$$
On the other hand, both values $f(n,m,\ell,-1)$ and $f(n,m,0,1)$ are less than $mM\le 123n^3$, which implies that Geometric Median with outliers is NP-hard to approximate within a factor of $\displaystyle 1+\frac{0.008}{123n^6}$.
But, for an arbitrary polynomial $p(n)$, any approximation scheme FPTAS allows to get a $\displaystyle\Big(1+\frac{1}{p(n)}\Big)$-approximation in polynomial time.
Therefore, the existence of such schemes is impossible unless P$=$NP.

Finally, note that the constructed reduction can be represented as a pa\-ra\-me\-te\-rized reduction of the W[$1$]-hard Independent Set problem parameterized by $m$ \cite{Downey} to Geometric Median with outliers parameterized by~$m$.
It follows the statement on the W[$1$]-hardness.
The theorem is proved.
\hfill$\Box$
\end{proof}

\subsection*{Balanced Geometric $2$-Median clustering}
Based on the same construction of the set $X$, we can give similar reductions from the well-known NP-hard problem of finding a maximum bisection in a $3$-regular graph~\cite{Feige} to the Equal-Size and Balanced Geometric $2$-Median problems.

Indeed, in a $3$-regular graph, the number of the edges connecting an $m$-element subset of vertices $S$ with its complement equals $3m-2\ell_S$.
Therefore, by Lemma~\ref{medcut}, the minimum value of $cost(X_S,\mu(X_S))+cost(X_{V\setminus S},\mu(X_{V\setminus S}))$ in the case when $m=n-m$ corresponds to a bisection with the maximum number of crossing edges.
It gives the NP-hardness and the non-existence of approximation schemes FPTAS for Equal-Size Geometric $2$-Median.

To get the similar result for Balanced Geometric $2$-Median, we need to note that the value of the objective function of this problem on the bipartition $X_S$, $X\setminus X_S$ equals the expression
$$f(n,m,\ell_1,\gamma_1,\ell_2,\gamma_2)=mf(n,m,\ell_1,\gamma_1)+(n-m)f(n,n-m,\ell_2,\gamma_2),$$
where $\ell_{1,2}\le 3n/2$ and $\gamma_{1,2}\in[-1,1]$.
But this expression can be approximated by the value of $\displaystyle m^2M\sqrt{1-\frac{1}{m}}+(n-m)^2M\sqrt{1-\frac{1}{n-m}}$ for big~$M$.
It follows that the minimum of $f(n,m,\ell_1,\gamma_1,\ell_2,\gamma_2)$ is attained when $m=\lfloor n/2\rfloor$ and, by the above observations, corresponds to a maximum bisection.

\subsection{Location of the geometric median}
In this section, we prove the upper bound for the distance between the geometric median and the mean of the set $X_S$ underlying Lemma~\ref{medcut}.

Let $S$ be an arbitrary $m$-element subset of~$V$.
Since the mean of any single- and two-element set is also its geometric median, we will assume that $m\ge 3$.

\subsection*{Rough estimations}
First, we prove some rough estimation for the value $\|\mu(X_S)-c(X_S)\|$ which will be needed for the proof of the stronger upper bound.

We start by introducing some notations.
Given a set $Y\subset\mathbb R^{E\cup V}$, denote by ${\it Aff}(Y)$ the affine hull of this set.
Denote by $\Gamma$ the subspace in space $\mathbb R^{E\cup V}$ consisting of all the vectors $y$ with the zero edge coordinates: $y(e)=0$, $e\in E$.
Given a vertex $v$ of the graph $G$, define the point $y_v\in\mathbb R^{E\cup V}$ for which \mbox{$y_v(v)=M$} and all the other coordinates are zero.
In other words, the vector $y_v$ is the orthogonal projection of $x_v$ into the space~$\Gamma$.
Note that the distance from the mean $c(Y_S)$ of the set $Y_S=\{y_v\mid v\in S\}$ to each of its elements is $A=M\sqrt{1-1/m}$.

\begin{lemma}\label{m1}
The geometric median of the set $Y_S$ coincides with its mean.
Moreover, if a vector $z\in\mathbb R^{E\cup V}$ is orthogonal to the affine space ${\it Aff}(Y_S)$, then the minimum of the value $cost(Y_S,z+x)$ subject to $x\in{\it Aff}(Y_S)$ is attained at $x=c(Y_S)$.
\end{lemma}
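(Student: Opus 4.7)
The plan is to prove the stronger ``moreover'' assertion directly by a gradient computation, and then derive the equality $\mu(Y_S)=c(Y_S)$ as an easy consequence. The key geometric observation to start from is that every $y_v$ lies in ${\it Aff}(Y_S)$ while $z$ is orthogonal to ${\it Aff}(Y_S)$, so for any $x\in{\it Aff}(Y_S)$ the displacement $y_v-x$ is parallel to ${\it Aff}(Y_S)$ and hence orthogonal to $z$. The Pythagorean identity then yields $\|y_v-(z+x)\|^2=\|y_v-x\|^2+\|z\|^2$, so the objective function takes the form
$$cost(Y_S,z+x)=g(x):=\sum_{v\in S}\sqrt{\|y_v-x\|^2+\|z\|^2}.$$

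The function $g$ is a sum of convex functions of $x$, hence convex on the whole space $\mathbb{R}^{E\cup V}$; moreover it is differentiable at $c(Y_S)$, because either $\|z\|>0$ (so every summand is smooth everywhere), or $\|z\|=0$ and then $c(Y_S)$ lies at positive distance $A$ from each $y_v$. I would then compute
$$\nabla g(c(Y_S))=-\sum_{v\in S}\frac{y_v-c(Y_S)}{\sqrt{\|y_v-c(Y_S)\|^2+\|z\|^2}}.$$
The crucial symmetry is that $\|y_v-c(Y_S)\|=A$ is the same for every $v\in S$, so all denominators equal the common value $\sqrt{A^2+\|z\|^2}$ and can be pulled out of the sum, leaving $\sum_{v\in S}(y_v-c(Y_S))=0$. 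Thus $c(Y_S)$ is a critical point of the convex function $g$, hence a global minimizer on $\mathbb{R}^{E\cup V}$, and in particular on ${\it Aff}(Y_S)$. This establishes the ``moreover'' assertion.

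For the first assertion, I would write an arbitrary $p\in\mathbb{R}^{E\cup V}$ uniquely as $p=x+z$ with $x\in{\it Aff}(Y_S)$ and $z$ orthogonal to ${\it Aff}(Y_S)$, and apply the ``moreover'' part to get $cost(Y_S,p)=g(x)\ge g(c(Y_S))=m\sqrt{A^2+\|z\|^2}$. The right-hand side is a strictly increasing function of $\|z\|$, so the overall minimum is attained uniquely at $z=0$ and $x=c(Y_S)$, i.e., at $p=c(Y_S)$, which is exactly the claim $\mu(Y_S)=c(Y_S)$.

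I do not foresee a significant obstacle in this plan. The only mild subtlety is the potential non-smoothness of $g$ in the pure geometric-median case $\|z\|=0$, but this is harmless because $c(Y_S)$ is bounded away from the set $\{y_v:v\in S\}$ (every $y_v$ is at distance exactly $A>0$ from $c(Y_S)$), so the vanishing-gradient criterion for the convex function $g$ applies directly.
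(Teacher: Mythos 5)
Your proof is correct and follows essentially the same route as the paper's: both hinge on the Pythagorean splitting $\|y_v-(z+x)\|^2=\|y_v-x\|^2+\|z\|^2$ and on the fact that all $y_v$ are equidistant from $c(Y_S)$, so the gradient denominators factor out and the sum of deviations vanishes. You merely reverse the order (proving the ``moreover'' part first and deducing the median claim from it) and make the convexity justification explicit, which the paper leaves implicit; neither change affects the substance.
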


\begin{proof}
The derivative of the function $cost(Y_S,x)$ along any coordinate $u$ is
\begin{eqnarray*}
\Big(\sum_{v\in S}\|x-y_v\|\Big)'_u=\sum_{v\in S}\frac{(x-y_v)(u)}{\|x-y_v\|}.
\end{eqnarray*}
So the value of this derivative at the point $x=c(Y_S)$ is
\begin{eqnarray*}
\frac{1}{A}\Big(m\,c(Y_S)-\sum_{v\in S}y_v\Big)(u)=0.
\end{eqnarray*}
It follows that the point $c(Y_S)$ is the geometric median of the set $Y_S$.

Then the derivative of the function $f(x)=cost(Y_S,x)$ along any direction $\phi$ in the affine space ${\it Aff}(Y_S)$ is equal to zero at the point $x=c(Y_S)$: $f'_\phi(c(Y_S))=0$.
On the other hand, the derivative of the function $g(x)=cost(Y_S,z+x)$ along any direction $\phi$ in ${\it Aff}(Y_S)$ is
\begin{eqnarray*}
g'_\phi(x)=\Big(\sum_{v\in S}\sqrt{\|z\|^2+\|x-y_v\|^2}\Big)'_\phi=
\sum_{v\in S}\frac{\|x-y_v\|\,\big(\|x-y_v\|\big)'_\phi}{\sqrt{\|z\|^2+\|x-y_v\|^2}}.
\end{eqnarray*}
So the value of this derivative at the point $x=c(Y_S)$ is
\begin{eqnarray*}
\frac{A}{\sqrt{\|z\|^2+A^2}}\sum_{v\in S}\big(\|x-y_v\|\big)'_\phi(c(Y_S))=
\frac{A}{\sqrt{\|z\|^2+A^2}}f'_\phi(c(Y_S))=0.
\end{eqnarray*}
It follows that the point $x=c(Y_S)$ is the minimum of the function $cost(Y_S,z+x)$ subject to $x\in{\it Aff}(Y_S)$.
The lemma is proved.
\hfill$\Box$
\end{proof}

\begin{lemma}\label{m2}
Suppose that $A\ge 300n$ and $\displaystyle dir_v=\frac{y_v-c(Y_S)}{\|y_v-c(Y_S)\|}$, $v\in S$.
Then, for each $v\in S$, the following holds: $\big|\big\langle\mu(X_S)-c(X_S),\,dir_v\big\rangle\big|<1.45\sqrt n$.
\end{lemma}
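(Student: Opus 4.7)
The plan is to extract the bound from the first-order optimality (FOC) condition satisfied by $\mu:=\mu(X_S)$, projected onto the direction $dir_v$. First I would compute $a_u:=\langle x_u-c(X_S),\,dir_v\rangle$ explicitly: decomposing $x_u-c(X_S)$ as a vector in $\Gamma$ (contributing $\tilde y_u:=y_u-c(Y_S)$) plus one in $\mathbb R^E$ (which is orthogonal to $dir_v\in\Gamma$), and using $\sum_{u\in S}\tilde y_u=0$ together with $\|\tilde y_v\|=A$, I get $a_v=A$ and $a_u=-A/(m-1)$ for $u\in S\setminus\{v\}$, so in particular $\sum_u a_u=0$. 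Dotting the FOC $\sum_u(\mu-x_u)/\|\mu-x_u\|=\mathbf 0$ with $dir_v$ and writing $t:=\langle\mu-c(X_S),\,dir_v\rangle$, $D_u:=\|\mu-x_u\|$, this yields
\[
t \;=\; \frac{\sum_u a_u/D_u}{\sum_u 1/D_u}.
\]
Thus $t$ is a weighted average of the $a_u$'s, vanishing whenever the $D_u$'s are equal, so $|t|$ is controlled by the spread of the $D_u$.

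Next I would bound $|D_u-D_w|$. From Property 1(b) and the identity
\[
D_u^2 \;=\; A^2+\zeta_u+\delta^2-2\langle y,z_u\rangle,\qquad y:=\mu-c(X_S),\ \delta:=\|y\|,
\]
I obtain $|D_u^2-D_w^2|\le|\zeta_u-\zeta_w|+2\delta\,\|z_u-z_w\|\le 0.58n+O(A\delta)$ (using $\|z_u-z_w\|\le A\sqrt{2m/(m-1)}+O(\sqrt n)$). Since $D_u+D_w\ge 2A-O(\delta+n/A)$, this gives $|D_u-D_w|\le O(n/A)+O(\delta)$. A crude preliminary bound $\delta\le A+O(n/A)$ follows from $F(\mu)\le F(c(X_S))$ combined with the triangle-type lower bound $F(x)\ge m\|x-c(X_S)\|$. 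Plugging $D_u=A+\eta_u$ into the formula for $t$ and Taylor expanding $1/D_u=1/A-\eta_u/A^2+\cdots$, the numerator collapses to $O((\delta+n/A)/A)$ while the denominator $\sum_u 1/D_u\ge m/(2A)$, producing an inequality of the shape $|t|\le c(\delta+n/A)/m$ for an explicit constant $c$.

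To close the loop, I would bootstrap: the inequality applies to every $v\in S$, and because the $dir_v$'s span the $(m-1)$-dimensional subspace of $\Gamma$ in which the $\Gamma$-component of $y$ must lie (with explicit Gram matrix $\langle dir_u,dir_v\rangle=-1/(m-1)$ for $u\neq v$), the collection of projections controls $\|\pi_\Gamma y\|$; this feeds back to shrink the $O(\delta)$ contribution. After one or two iterations, careful bookkeeping of constants under $A\ge 300n$ yields the claimed $|t|<1.45\sqrt n$.

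The main obstacle is the circular dependence between the bound on $|t|$ and the a priori bound on $\delta$: the clean control on $|D_u-D_w|$ requires $\delta$ to be small, but $\delta$ is itself partially governed by the projections $t$ we want to bound. Resolving this needs either the bootstrap above or, more cleanly, a direct second-order expansion of $F$ about $c(X_S)$ with a lower bound on the Hessian's restriction to $\mathrm{span}(z_u)$ -- non-trivial because the $z_u$'s are very close to the regular simplex configuration in $\mathbb R^{E\cup V}$. The constant $1.45$ is certainly slack (Lemma \ref{medmean} will give $\delta<4.5n/(mA)$, which is vastly stronger), but the $\sqrt n$-bound is all that will be needed as input to that sharper result.
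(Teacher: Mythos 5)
Your first-order-condition identity is correct: with $z_u=x_u-c(X_S)$ one indeed gets $a_v=A$, $a_u=-A/(m-1)$ for $u\ne v$, hence $\sum_u a_u=0$ and $t=\big(\sum_u a_u/D_u\big)/\big(\sum_u 1/D_u\big)$, and this is a genuinely different route from the paper, which never uses the optimality condition of $\mu(X_S)$. But the argument has a real gap at exactly the point you flag, and the gap is not closed. Before this lemma the only available bound is $\delta\le\max_u\|z_u\|\approx A$ (the paper's $\delta<1.8\sqrt{nm}$ is Lemma~\ref{m3}, which \emph{depends} on Lemma~\ref{m2}). With $\delta$ of order $A$, the quantity $D_u^2=A^2+\zeta_u+\delta^2-2\langle y,z_u\rangle$ can lie anywhere between roughly $0$ and $4A^2$, so the $D_u$ need not concentrate near $A$; the expansion $1/D_u=1/A-\eta_u/A^2+\cdots$ and the step ``numerator $=O((\delta+n/A)/A)$, denominator $\ge m/(2A)$'' are therefore not justified on the first pass. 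Moreover, even granting $D_u\approx A$, the contraction you need does not obviously hold: the spread bound gives $|D_u-D_w|\le 0.29n/A+\sqrt{3}\,\delta$ (since $\|z_u-z_w\|\approx A\sqrt{2m/(m-1)}\le\sqrt{3}A$), which yields $|t|\le\big(0.58n/A+2\sqrt{3}\,\delta\big)/m$; combining with $\delta\le\sqrt{m-1}\max_v|t_v|+O(\sqrt{nm})$ requires $2\sqrt{3}\cdot\sqrt{m-1}/m<1$, which \emph{fails} for $m=3$ ($2\sqrt{6}/3\approx 1.63$). To make the loop contract one must exploit that the dominant $\delta$-dependent part of the numerator is not a generic $O(\delta/A)$ but is $\frac{m}{(m-1)A}\langle y,z_v\rangle\approx\frac{m\,t}{(m-1)A}$, contributing $t/(m-1)\le t/2$ after division, which can be absorbed into the left-hand side --- but that refinement is absent from your sketch, and it still presupposes $D_u$ within a constant factor of $A$. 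So the assertion that ``careful bookkeeping yields $1.45\sqrt n$'' is not substantiated.

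The paper sidesteps all of this with a comparison argument needing no control of $\delta$: it projects $\mu(X_S)$ into the vertex-coordinate subspace $\Gamma$ (projection can only decrease the cost relative to the projected set $Y_S$), slides the projection along the slice orthogonal to $dir_v$ to the point $\mu_v$ on the line through $c(Y_S)$ and $y_v$ using Lemma~\ref{m1}, and then pits the exact one-dimensional expression $cost(Y_S,\mu_v)=A-\varrho+(m-1)\sqrt{A^2+\varrho^2+2A\varrho/(m-1)}$ against the upper bound $cost(X_S,c(Y_S))\le m(A+n/(2A))$; the resulting quadratic inequality in $\varrho=\pm\langle\mu(X_S)-c(X_S),dir_v\rangle$ gives $|\varrho|<1.45\sqrt n$ directly. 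If you wish to keep the FOC route, you must either supply the missing contraction argument with explicit constants (including a lower bound on each $D_u$), or first establish $\delta=o(A)$ by independent means.
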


\begin{proof}
Let $\mu'$ be the orthogonal projection of the point $\mu(X_S)$ into the space~$\Gamma$.
Then $\mu'\in{\it Aff}(Y_S)$ and $cost(Y_S,\mu')\le cost(X_S,\mu(X_S))$ since the set $Y_S$ is the projection of the set $X_S$ into ${\it Aff}(Y_S)$.
Next, consider the affine subspace $H_v$ in ${\it Aff}(Y_S)$ orthogonal to the vector $y_v-c(Y_S)$ and passing through the point~$\mu'$.
Denote by $\mu_v$ the intersection point of this subspace with the line pas\-sing through the points $c(Y_S)$ and~$y_v$ (see Fig.~\ref{fig:F2}).
Then $\mu_v$ is the point nearest to $y_v$ in~$H_v$.
On the other hand, it is easy to see that $H_v$ can be obtained by shifting the affine subspace ${\it Aff}(Y_S\setminus\{y_v\})$ by the vector $z=\mu_v-c(Y_S\setminus\{y_v\})$ and $z$ is orthogonal to ${\it Aff}(Y_S\setminus\{y_v\})$.
Hence, applying Lemma~\ref{m1} to the set $Y_S\setminus\{y_v\}$, we get that the point $\mu_v$ is the minimum of the function $cost(Y_S\setminus\{y_v\},\,x)$ subject to $x\in H_v$.
It follows that
$cost(Y_S,\mu_v)\le cost(Y_S,\mu')\le cost(X_S,\mu(X_S))\le cost(X_S,c(Y_S))$.

\begin{figure}
\centering
\includegraphics[scale=1]{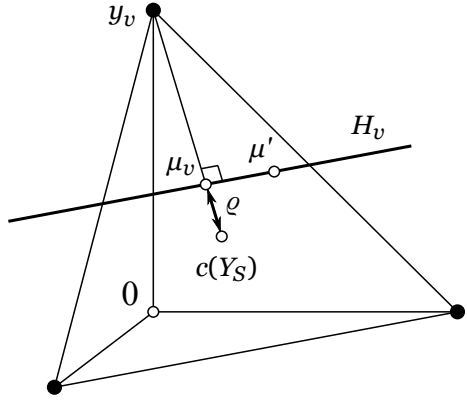}
\caption{Proof of Lemma~\ref{m2}}
\label{fig:F2}
\end{figure}

By the cosine theorem, we have
$$cost(Y_S,\mu_v)=A-\varrho+(m-1)\sqrt{A^2+\varrho^2+\frac{2A\varrho}{m-1}},$$
where $|\varrho|=\|\mu_v-c(Y_S)\|$.
At the same time,
$$cost(X_S,c(Y_S))=m\sqrt{A^2+n}\le m\Big(A+\frac{n}{2A}\Big).$$
Then, since $cost(Y_S,\mu_v)\le cost(X_S,c(Y_S))$, we get the inequality
\begin{eqnarray*}
\sqrt{A^2+\varrho^2+\frac{2A\varrho}{m-1}}\le
\frac{m\big(A+\frac{n}{2A}\big)+\varrho-A}{m-1}=
A+\frac{\frac{nm}{2A}+\varrho}{m-1},
\end{eqnarray*}
which implies that
\begin{eqnarray*}
A^2+\varrho^2+\frac{2A\varrho}{m-1}\le
A^2+2A\frac{\frac{nm}{2A}+\varrho}{m-1}+ \frac{\big(\frac{nm}{2A}+\varrho\big)^2}{(m-1)^2}.
\end{eqnarray*}
Taking into account that $m\ge 3$ and, therefore, $\displaystyle\frac{m}{m-1}\le 1.5$, we have
\begin{eqnarray*}
\varrho^2\le 1.5n+\frac{9n^2}{16A^2}+\frac{3n\varrho}{4A}+\frac{\varrho^2}{4}.
\end{eqnarray*}
So $|\varrho|<1.45\sqrt n$ if $A\ge 300n$.

It remains to note that the vectors $\mu(X_S)-\mu'$ and $c(X_S)-c(Y_S)$ are orthogonal to the space ${\it Aff}(Y_S)$ and, therefore, to the vector~$dir_v$.
But the vector $\mu'-\mu_v$ is also orthogonal to $dir_v$ by the choice of~$\mu_v$.
Thus, we obtain that
$$\big\langle\mu(X_S)-c(X_S),\,dir_v\big\rangle=\big\langle\mu_v-c(Y_S),\,dir_v\big\rangle,$$
which equals $|\varrho|$ in absolute value.
The lemma is proved.
\hfill$\Box$
\end{proof}

\begin{lemma}\label{m3}
Suppose that $A\ge 300n$.
Then $\|\mu(X_S)-c(X_S)\|<1.8\sqrt{nm}$.
\end{lemma}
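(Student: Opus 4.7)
The plan is to decompose $\mu(X_S)-c(X_S)$ into orthogonal vertex and edge parts and to bound each separately. Let $p$ be the orthogonal projection of $\mu(X_S)-c(X_S)$ onto $\Gamma$ and $q$ its projection onto $\mathbb R^E$, so that $\|\mu(X_S)-c(X_S)\|^2=\|p\|^2+\|q\|^2$. Since $\mu(X_S)$ lies in the affine hull of $X_S$, and this affine hull projects orthogonally onto the affine hull of $Y_S$, the vector $p$ actually belongs to the $(m-1)$-dimensional linear subspace $\Gamma_0:=\mathrm{span}\{y_v-c(Y_S):v\in S\}$.

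For $\|p\|$, the idea is to combine Lemma~\ref{m2} with the tight-frame property of $\{dir_v\}_{v\in S}$ on~$\Gamma_0$. A short calculation using that $dir_v$ is proportional to $e_v-\mathbf 1_S/m$ (with a factor $\sqrt{m/(m-1)}$) shows that $\sum_{v\in S}\langle p,dir_v\rangle^2=\frac{m}{m-1}\|p\|^2$ for every $p\in\Gamma_0$. Since each $dir_v$ lies in $\Gamma_0$, Lemma~\ref{m2}'s bound $|\langle\mu(X_S)-c(X_S),dir_v\rangle|<1.45\sqrt n$ transfers to $|\langle p,dir_v\rangle|<1.45\sqrt n$, whence
$$\|p\|^2\;=\;\frac{m-1}{m}\sum_{v\in S}\langle p,dir_v\rangle^2\;<\;(m-1)\cdot(1.45)^2\,n\;<\;2.11\,(m-1)\,n.$$

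For $\|q\|$, the plan is to use the first-order optimality of the geometric median in an edge direction. For any unit vector $\psi\in\mathbb R^E$, the gradient of $cost(X_S,\cdot)$ at $\mu(X_S)$ satisfies $\sum_{v\in S}\langle\psi,\mu(X_S)-x_v\rangle/r_v'=0$ with $r_v'=\|\mu(X_S)-x_v\|$. Writing $\mu(X_S)^E-x_v^E=q-(x_v^E-w)$, where $w=\frac{1}{m}\sum_{v\in S}x_v^E$ is the edge part of $c(X_S)$, and using $\sum_{v\in S}(x_v^E-w)=0$, this condition rearranges to
$$\langle\psi,q\rangle\sum_{v\in S}\tfrac{1}{r_v'}\;=\;\sum_{v\in S}\langle\psi,x_v^E-w\rangle\Bigl(\tfrac{1}{r_v'}-\tfrac{1}{A}\Bigr).$$
Using the bound on $\|p\|$ just established together with $A\ge 300n$, each $r_v'$ equals $A$ up to an additive error of order $\sqrt n+\|q\|^2/A$, so $|1/r_v'-1/A|=O(\sqrt n/A^2+\|q\|^2/A^3)$. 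Together with $|\langle\psi,x_v^E-w\rangle|\le\sqrt{\zeta_v}<\sqrt{1.13\,n}$ (from Property~1) and the lower bound $\sum_{v\in S}1/r_v'\ge m/(2A)$, taking $\psi=q/\|q\|$ yields a self-referential estimate $\|q\|\le C\,n/A+C'\sqrt n\,\|q\|^2/A^2$ with absolute constants $C,C'$; for $A\ge 300n$ this forces $\|q\|=O(n/A)$, which is in particular $O(1)$. Combining, $\|\mu(X_S)-c(X_S)\|^2<2.11(m-1)n+O(1)<3.24\,nm=(1.8)^2\,nm$.

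The main obstacle is the bound on $\|q\|$: the first-order optimality condition only produces a self-referential inequality, and one has to verify carefully that the quadratic-in-$\|q\|$ tail $\sqrt n\,\|q\|^2/A^2$ can be absorbed into the linear-in-$\|q\|$ term, which is where the hypothesis $A\ge 300n$ (combined with $\|p\|\ll A$) is used to make all the perturbation estimates go through.
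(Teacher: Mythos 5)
Your treatment of the vertex part $p$ is correct and is essentially the paper's argument in different clothing: the paper bounds each coordinate $\big(\mu(X_S)-c(X_S)\big)(v)$ by $1.45\sqrt{n}\sqrt{1-1/m}$ via a projection onto the plane through $y_v$, $c(Y_S)$, $\mathbf 0$, which is exactly your tight-frame identity $\sum_{v\in S}\langle p,dir_v\rangle^2=\frac{m}{m-1}\|p\|^2$ written coordinatewise; both give $\|p\|^2<1.45^2(m-1)n$.

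The gap is in the edge part $q$. Your first-order optimality argument terminates in a self-referential inequality of the form $\|q\|\le Cn/A+C'\sqrt{n}\,\|q\|^2/A^2$, and you assert that for $A\ge 300n$ this ``forces'' $\|q\|=O(n/A)$. It does not: an inequality $x\le a+bx^2$ is satisfied on two branches (all $x\le x_-$ and all $x\ge x_+$, where $x_\pm$ are the roots of $bx^2-x+a$), so without an a priori upper bound on $\|q\|$ placing you on the small branch, nothing is forced. You never establish such a bound, and the perturbation estimates feeding into the inequality (e.g.\ $r_v'=A+O(\sqrt n)+\dots$) themselves silently assume $\|q\|$ is already controlled. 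There is also a secondary issue: the stationarity equation $\sum_v(\mu(X_S)-x_v)/r_v'=0$ presupposes $\mu(X_S)\ne x_v$ for all $v$, which must be argued or replaced by a subgradient condition. The fix is much simpler than your machinery and is what the paper does: $\mu(X_S)$ is a convex combination of the points $x_v$, $v\in S$, hence each edge coordinate of $\mu(X_S)-c(X_S)$ has absolute value at most $1$, and only the at most $nm$ edges incident to $S$ contribute, giving $\|q\|^2\le nm$ outright (in fact convexity even gives $\|q\|\le\max_{v\in S}\sqrt{\zeta_v}<\sqrt{1.13n}$). Combined with your bound on $\|p\|$ this yields $\|p\|^2+\|q\|^2<1.45^2(m-1)n+nm<3.24nm$, i.e.\ the claimed $1.8\sqrt{nm}$; the entire optimality-condition detour, and with it the unresolved absorption step, can be deleted.
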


\begin{proof}
For every $v\in S$, consider the plane $P_v$ containing the points $y_v$, $c(Y_S)$, and the origin~$\textbf 0$.
Note that this plane is orthogonal to the affine space ${\it Aff}(Y_S)$ since the line $c(Y_S)\textbf 0$ is orthogonal to ${\it Aff}(Y_S)$ and is contained in~$P_v$.
At the same time, the projection $\mu'$ of the point $\mu(X_S)$ into the space $\Gamma$ lies in ${\it Aff}(Y_S)$.
So the orthogonal projection $\mu''$ of the point $\mu'$ into the plane $P_v$ belongs to the intersection of $P_v$ with ${\it Aff}(Y_S)$, i.e., lies on the line $y_vc(Y_S)$ (see Fig.~\ref{fig:F3}).
\begin{figure}
\centering
\includegraphics[scale=1]{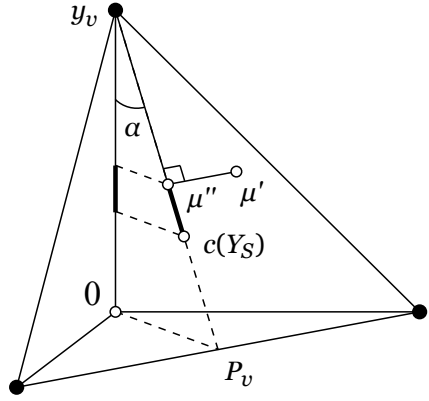}
\caption{Proof of Lemma~\ref{m3}}
\label{fig:F3}
\end{figure}
Then
\begin{eqnarray*}
\big(\mu(X_S)-c(X_S)\big)(v)=\big(\mu'-c(Y_S)\big)(v)=\big(\mu''-c(Y_S)\big)(v)=\\
\|\mu''-c(Y_S)\|\cos\alpha=\big|\big\langle\mu(X_S)-c(X_S),\,dir_v\big\rangle\big|\cos\alpha,
\end{eqnarray*}
where $\alpha$ is the angle between the lines $y_v\textbf 0$ and $y_vc(Y_S)$.
Taking into account the equation $\cos\alpha=\sqrt{1-1/m}$ and Lemma~\ref{m2}, it follows that the absolute value of the $v$-th coordinate of the vector $\mu(X_S)-c(X_S)$ is less than $1.45\sqrt n\sqrt{1-1/m}$.
On the other hand, for each edge $e\in E$ incident to $S$, the absolute value of the $e$-th coordinate of this vector is at most $1$ since the geometric median is a convex combination of the points $x_v$, $v\in S$.
For every other coordinate $i\in E\cup V$ and every $v\in S$, we have $x_v(i)=0$, which implies that $\big(\mu(X_S)-c(X_S)\big)(i)=0$.
Thus, we obtain the inequality $\|\mu(X_S)-c(X_S)\|^2<1.45^2nm+nm<3.2nm$, so $\|\mu(X_S)-c(X_S)\|<1.8\sqrt{nm}$.
The lemma is proved.
\hfill$\Box$
\end{proof}

\subsection*{A tighter estimation}
Now, we get the tighter estimation for the geometric median location which is used in Section~2.1.
We start by proving the following geometric properties of the vectors $z_v=x_v-c(X_S)$, $v\in S$.

\begin{Property} 
If $M\ge 300n$, then the angle between the vectors $z_u$ and $z_v$ for different $u,v\in S$ lies in the interval $\displaystyle\Big(\arccos\frac{-0.99\ }{m-1},\,\arccos\frac{-1.01\ }{m-1}\Big)$.
\end{Property}

\begin{proof}
Indeed, $\displaystyle\langle z_u,z_v\rangle=\sum_{a\in V}z_u(a)z_v(a)+\sum_{e\in E}z_u(e)z_v(e)$.
The first term in this expression is $2M(1-1/m)(-M/m)+(m-2)M^2/m^2=-M^2/m$;
the second is at least $-1-(2n-2)(1-1/m)/m>-1-2n/m$ and is at most $(nm-2n)/m^2$.
Then, taking into account Pro\-per\-ty~1(a), we obtain that the cosine of the angle between the vectors $z_u$ and $z_v$ is between the values $\displaystyle a=\frac{-M^2/m-1-2n/m}{A^2+0.55n}$ and $b=\displaystyle\frac{-M^2/m+(nm-2n)/m^2}{A^2+1.13n}$.
It remains to note that $A^2=M^2(1-1/m)$ and that $M$ is sufficiently big, so we have $\displaystyle a>\frac{-1.01\ }{m-1}$ and $\displaystyle b<\frac{-0.99\ }{m-1}$.
\hfill$\Box$
\end{proof}

\begin{Property} 
If $M\ge 300n$, then
$\displaystyle\sum_{v\in S}\langle y,z_v\rangle^2\le\frac{m+1.515}{2}\|y\|^2\max_{v\in S}\|z_v\|^2$.
\end{Property}

\begin{proof}
By Property~2, the angles between the vectors $z_u$ and $\pm z_v$ for $u\ne v$ are at least $\displaystyle\alpha=\pi-\arccos\frac{-1.01\ }{m-1}=\arccos\frac{1.01}{m-1}$.
So there exists at most one vector $z_u$, $u\in S$, such that the angles between $y$ and $\pm z_u$ are at most some $\beta<\alpha/2$.
Then $\displaystyle\sum_{v\in S}\langle y,z_v\rangle^2\le\|y\|^2\max_{v\in S}\|z_v\|^2\,t$, where $\displaystyle t=\cos^2\beta+(m-1)\cos^2(\alpha-\beta)$.
Note that $\displaystyle t=\frac{m}{2}+\frac{\cos(2\beta)+(m-1)\cos(2\alpha-2\beta)}{2}$ and that this expression has the only local extremum on $[0,\alpha/2]$, at the point $\beta=0$.
Therefore, we have $\displaystyle t=t(\beta)\le\max\{t(0),t(\alpha/2)\}=t(\alpha/2)=m\cos^2(\alpha/2)=\frac{m}{2}(1+\cos\alpha)$, which equals $\displaystyle\frac{m}{2}\Big(1+\frac{1.01}{m-1}\Big)=\frac{m}{2}+\frac{1.01}{2}+\frac{1.01}{2(m-1)}\le\frac{m+1.515}{2}$ if $m\ge 3$.
\hfill$\Box$
\end{proof}

%

\begin{Property} 
If $M\ge 300n$, then
$\displaystyle\Big|\sum_{v\in S}\langle y,z_v\rangle\zeta_v\Big|<1.061n\,\|y\|\max_{v\in S}\|z_v\|$.
\end{Property}

\begin{proof}
By Property~1(a), we have $\displaystyle\zeta_v=n-\frac{0.5n}{m}+t_vn$, where $|t_v|$ is bounded by the value $\displaystyle c=\frac{1.5}{m}-\frac{2}{m^2}$.
Since $\displaystyle\sum_{v\in S}z_v$ is the zero vector, it follows the inequality $\displaystyle\Big|\sum_{v\in S}\langle y,z_v\rangle\zeta_v\Big|\le cn\sum_{v\in S}|\langle y,z_v\rangle|$.
Next, by using the observations from the proof of Property~3, we obtain that $\displaystyle\sum_{v\in S}|\langle y,z_v\rangle|\le\|y\|\max_{v\in S}\|z_v\|\,t$, where
$$t=m\cos(\alpha/2)=m\sqrt{\frac{\cos\alpha+1}{2}}=\frac{m}{\sqrt 2}\sqrt{1+\frac{1.01}{m-1}}.$$
Therefore, the multiplier $c\,t$ equals $\displaystyle\frac{1}{\sqrt 2}\Big(1.5-\frac{2}{m}\Big)\sqrt{1+\frac{1.01}{m-1}}$, which is less than $\displaystyle\frac{1.5}{\sqrt 2}<1.061$.
\hfill$\Box$
\end{proof}

\begin{lemma}\label{medmean}
Suppose that $A\ge 100nm$.
Then $\displaystyle\|\mu(X_S)-c(X_S)\|<\frac{4.5n}{mA}$.
\end{lemma}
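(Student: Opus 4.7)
\medskip
\noindent\textbf{Plan.} I would derive Lemma~\ref{medmean} from the first-order optimality condition for the geometric median, $\sum_{v\in S}(y-z_v)/r_v = \mathbf 0$, where $r_v := \|y - z_v\|$. Taking the inner product with $y$ gives $\delta^2 \sum_v 1/r_v = \sum_v \langle y, z_v\rangle/r_v$, and since $\sum_v z_v = \mathbf 0$ implies $\sum_v \langle y, z_v\rangle = 0$, I may subtract $\tfrac{1}{A}\sum_v \langle y, z_v\rangle = 0$ from the right-hand side to obtain the pivotal identity
\[
\delta^2 \sum_{v\in S}\frac{1}{r_v} \;=\; \sum_{v\in S}\langle y, z_v\rangle\Bigl(\frac{1}{r_v}-\frac{1}{A}\Bigr).
\]
This cancellation is crucial: it converts the right-hand side from an $O(1)$ quantity into one that is already quadratic in the small error $1/r_v - 1/A$.

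Next, using Property~1(b) I write $r_v = A\sqrt{1+\varepsilon_v}$ with $\varepsilon_v := (\zeta_v + \delta^2 - 2\langle y, z_v\rangle)/A^2$. Under $A \ge 100nm$, together with Property~1(a), the Cauchy--Schwarz estimate $|\langle y, z_v\rangle|\le\delta\|z_v\|$, and the rough bound $\delta < 1.8\sqrt{nm}$ from Lemma~\ref{m3}, each $|\varepsilon_v|$ is easily smaller than $10^{-3}$, so I can apply the Taylor expansion $(1+\varepsilon_v)^{-1/2} = 1 - \varepsilon_v/2 + O(\varepsilon_v^2)$ with an explicit Lagrange remainder. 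Substituting and once more using $\sum_v \langle y, z_v\rangle = 0$ to delete the top-order piece on the right, the identity reduces to
\[
\frac{m\,\delta^2}{A}\bigl(1+o(1)\bigr) \;=\; \frac{1}{A^3}\sum_{v\in S}\langle y, z_v\rangle^2 \;-\; \frac{1}{2A^3}\sum_{v\in S}\zeta_v\langle y, z_v\rangle \;+\; (\text{remainder}).
\]
Property~3 then controls $\sum_v \langle y, z_v\rangle^2 \le \tfrac{m+1.515}{2}\,\delta^2\max_v\|z_v\|^2$, Property~4 gives $\bigl|\sum_v \zeta_v\langle y, z_v\rangle\bigr| < 1.061\,n\,\delta\max_v\|z_v\|$, and Property~1(b) makes $\max_v\|z_v\|^2 \le A^2 + 1.13n$ essentially equal to $A^2$. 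Cancelling one factor of $\delta$, the inequality collapses to $(m-1.515)\delta \lesssim 1.061\,n/A$, so $\delta \le 1.061\,n/((m-1.515)A)$; since $m/(m-1.515) \le 2.02$ for $m\ge 3$, this sits comfortably below $4.5\,n/(mA)$.

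The main obstacle is honest bookkeeping of the lower-order terms: the $O(\varepsilon_v^2)$ Taylor tails, the discrepancy between $\sum_v 1/r_v$ and $m/A$, the $\delta^2$ piece inside $\varepsilon_v$, and the deviation of $\max_v\|z_v\|$ from $A$ coming from Property~1(b). The hypothesis $A \ge 100nm$ suppresses each of these by a factor of order $10^{-4}$ compared with the retained leading terms, leaving the constant $4.5$ in the statement with ample slack to absorb every such contribution.
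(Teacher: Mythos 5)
Your argument is correct in substance but takes a genuinely different route from the paper. The paper never touches the stationarity condition: it compares the two objective values $cost(X_S,c(X_S))=\sum_{v}\sqrt{A^2+\zeta_v}$ and $cost(X_S,\mu(X_S))=\sum_{v}\sqrt{A^2+\zeta_v+\delta^2-2\langle y,z_v\rangle}$, Taylor-expands both square roots to third order, and extracts the inequality $\frac{m-1.517}{4m}\,\delta\lesssim\frac{1.061n}{2mA}$ from $cost(X_S,c(X_S))\ge cost(X_S,\mu(X_S))$. You instead start from the vanishing gradient $\sum_v (y-z_v)/r_v=\mathbf 0$, pair it with $y$, and exploit the same cancellation $\sum_v\langle y,z_v\rangle=0$; after that the two proofs converge, since both lean on exactly Property~1, Properties~3 and~4, Lemma~\ref{m3}, and the same Taylor bookkeeping. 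Your first-order version buys a constant roughly twice as good (the coefficient of $\delta^2$ becomes $\frac{m-1.515}{2}$ rather than $\frac{m-1.517}{4}$) and needs only a second-order expansion of $(1+\varepsilon_v)^{-1/2}$; the paper's zeroth-order version needs a third-order expansion but never has to worry about differentiability. Two small points you should make explicit. First, the identity $\sum_v(y-z_v)/r_v=\mathbf 0$ presupposes $\mu(X_S)\ne x_v$ for every $v\in S$ (otherwise only a subgradient condition holds); this does follow from Lemma~\ref{m3}, since $\delta<1.8\sqrt{nm}<A\le\|z_v\|$, but it must be said. Second, your claim that $|\varepsilon_v|<10^{-3}$ is not right: the dominant contribution is $2|\langle y,z_v\rangle|/A^2\le 2\delta\|z_v\|/A^2\approx 3.6\sqrt{nm}/A$, which under $A\ge 100nm$ is only bounded by about $0.012$ (the paper uses the threshold $0.02$ at the analogous step). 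This does not break the argument --- the Lagrange-remainder expansion is just as valid at that scale, and the quadratic tail $\langle y,z_v\rangle\varepsilon_v^2$ it produces is still absorbed by the slack in the constant $4.5$ --- but the stated numerical bound needs correcting.
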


\begin{proof}
We will compare the values of $cost(X_S,c(X_S))$ and $cost(X_S,\mu(X_S))$.
The first is
$$\sum_{v\in S}\|z_v\|=\sum_{v\in S}\sqrt{A^2+\zeta_v},$$
while the second is
$$\sum_{v\in S}\sqrt{\|z_v\|^2+\|y\|^2-2\langle y,z_v\rangle}=
\sum_{v\in S}\sqrt{A^2+\zeta_v+\delta^2-2\langle y,z_v\rangle}.$$
Note that $\displaystyle\Big|\frac{\zeta_v}{A^2}\Big|<0.001$ and
$$\Big|\frac{\zeta_v+\delta^2-2\langle y,z_v\rangle}{A^2}\Big|<\Big|\frac{1.13n+3.24nm+2\cdot 1.8\sqrt{nm}\sqrt{A^2+1.13n}}{A^2}\Big|<0.02$$
by Property~1, Lemma~\ref{m3}, and the condition for~$A$.
On the other hand, Taylor's theorem (in the Lagrange remainder form) implies the equation
\begin{eqnarray*}
\sqrt{1+\varepsilon}=1+\frac{\varepsilon}{2}-\frac{\varepsilon^2}{8}+\theta\frac{\varepsilon^3}{15.7}\mbox{ for some }\theta\in[0,1]\mbox{ if }|\varepsilon|\le 0.02.
\end{eqnarray*}
Therefore,
$\displaystyle cost(X_S,c(X_S))\le\sum_{v\in S}A\Big(1+\frac{\zeta_v}{2A^2}-\frac{\zeta_v^2}{8A^4}+\frac{\zeta_v^3}{15.7A^6}\Big)$
and
\begin{eqnarray*}
cost(X_S,\mu(X_S))\ge
\sum_{v\in S}A\Big(1+\frac{\zeta_v+\delta^2-2\langle y,z_v\rangle}{2A^2}-\\
\frac{\zeta_v^2+2\delta^2\zeta_v+\delta^4-4\langle y,z_v\rangle(\delta^2+\zeta_v)+4\langle y,z_v\rangle^2}{8A^4}-\frac{8\langle y,z_v\rangle^3}{15.7A^6}\Big).
\end{eqnarray*}
Since $\displaystyle\sum_{v\in S}z_v$ is the zero vector, the sums of the terms $\langle y,z_v\rangle$ are zero.
On the other hand, by Properties~4 and~1, the sum of the terms $\langle y,z_v\rangle\zeta_v$ is greater than $-1.061n\delta\sqrt{A^2+1.13n}$.
At the same time, by Properties~3 and~1, the sum of the terms $-\langle y,z_v\rangle^2$ is greater than $\displaystyle-\frac{m+1.515}{2}\,\delta^2(A^2+1.13n)$.
Therefore, the inequality $cost(X_S,c(X_S))\ge cost(X_S,\mu(X_S))$ implies that
\begin{eqnarray*}
\frac{m\delta^2}{2A^2}-\frac{1.13nm\delta^2}{4A^4}-\frac{m\delta^4}{8A^4}-\frac{1.061n\delta\sqrt{A^2+1.13n}}{2A^4}-\\
\frac{(m+1.515)\,\delta^2(A^2+1.13n)}{4A^4}-
\frac{8m\delta^3(A^2+1.13n)^{3/2}}{15.7A^6}\le\frac{(1.13n)^3m}{15.7A^6}.
\end{eqnarray*}
Recall that $\delta<1.8\sqrt{nm}$ by Lemma~\ref{m3}.
Then, taking into account the obvious inequality $(A^2+1.13n)^\alpha<1.001A^{2\alpha}$ for $\alpha\le 3/2$ and $A\ge 100nm$, we have
\begin{eqnarray*}
\delta\Big(\frac{1}{2}-\frac{1.13n}{4A^2}-\frac{1.8^2nm}{8A^2}-\Big(\frac{1.001}{4}+\frac{1.517}{4m}\Big)-\frac{8\cdot 1.8\sqrt{nm}\cdot 1.001}{15.7A}\Big)<\\
\frac{(1.13n)^3}{15.7\delta A^4}+\frac{1.061\cdot 1.001n}{2mA}.
\end{eqnarray*}
But $m\ge 3$, so
\begin{eqnarray*}
\delta\Big(0.1233-\frac{1.13n}{4A^2}-\frac{1.8^2nm}{8A^2}-\frac{8\cdot 1.8\sqrt{nm}\cdot 1.001}{15.7A}\Big)<
\frac{(1.13n)^3}{15.7\delta A^4}+\frac{1.061\cdot 1.001n}{2mA}.
\end{eqnarray*}
Since $A$ is sufficiently big, it follows that $\displaystyle\delta<\frac{4.5n}{mA}$.
The lemma is proved.
\hfill$\Box$
\end{proof}

\section{The case of $\ell_\infty$ distances}
To prove the statements of Theorem~\ref{th1} for the case of $\ell_\infty$ distances, we construct a similar but much simpler reduction from the Independent Set problem.

Let $G=(V,E)$ be any $n$-vertex undirected graph.
Without loss of generality, we will assume that $G$ has no isolated vertices.
Fix an arbitrary orientation on its edges: for every edge, choose an endpoint of this edge which it is ``outgoing from'' and one which it is ``incoming to''.
Next, map each vertex $v\in V$ to the point $x_v\in\mathbb R^E$ with the following coordinates: $x_v(e)=1$ for every edge $e\in E$ outgoing from $v$ and $x_v(e)=-1$ for ones incoming to $v$; all the other coordinates are zero (see Fig.~\ref{fig:FInfty}).
Note that, since $G$ has no isolated vertices, all the vectors $x_v$, $v\in V$, are distinct.
Define the instance of the $\ell_\infty$-Median problem with outliers corresponding to the graph $G$ as the set $X=\{x_v\mid v\in V\}$.

\begin{figure}
\centering
\includegraphics[scale=1]{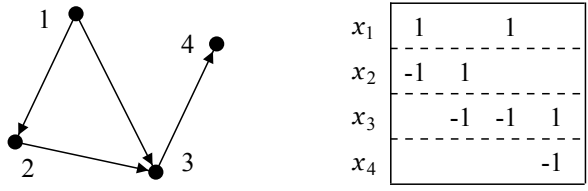}
\caption{Reduction in the case of $\ell_\infty$ distances}
\label{fig:FInfty}
\end{figure}

\begin{Comment}
Selecting an edge orientation provides that the $\ell_\infty$-distance between the vectors $x_v,x_u$ for adjacent different vertices $u,v\in V$ is $2$, while that for non-adjacent ones is~$1$.
It will be used below to justify the proposed reduction.
\end{Comment}

Estimate the value of the $\ell_\infty$-median cost of the set $X_S=\{x_v\mid v\in S\}$ for an arbitrary $m$-element subset of vertices $S\subseteq V$, where $m\ge 2$.
Consider the point $c_S\in\mathbb R^E$ for which $c_S(e)=1/2$ (or $-1/2$) if the edge $e$ is outgoing from $S$ and incoming to $V\setminus S$ (or vice versa),
otherwise, $c_S(e)=0$.
Note that, if the set $S$ is independent, then any edge $e\in E$ which is incident to vertices from $S$ is either outgoing from $S$ or incoming to $S$.
In this case, the $\ell_\infty$-distance from $c_S$ to each point $x_v$, $v\in S$, is exactly $1/2$, so $\displaystyle\sum_{v\in S}\|x_v-c\|_\infty=m/2$.

On the other hand, for any subset $S\subseteq V$ and any point $c\in\mathbb R^E$, we have
\begin{eqnarray*}
\sum_{v\in S}\|x_v-c\|_\infty=\frac{1}{2(m-1)}\sum_{v\in S}\sum_{u\in S\setminus\{v\}}\big(\|x_v-c\|_\infty+\|x_u-c\|_\infty\big)\ge\\
\frac{1}{2(m-1)}\sum_{v\in S}\sum_{u\in S\setminus\{v\}}\|x_v-x_u\|_\infty\ge\frac{m^2-m}{2(m-1)}=m/2.
\end{eqnarray*}
Moreover, by Comment~3, if the set $S$ contains at least two adjacent vertices $a$ and $b$, then
\begin{eqnarray*}
\sum_{v\in S}\|x_v-c\|_\infty=\|x_a-c\|_\infty+\|x_b-c\|_\infty+\sum_{v\in S\setminus\{a,b\}}\|x_v-c\|_\infty\ge\\
\|x_a-x_b\|_\infty+\sum_{v\in S\setminus\{a,b\}}\|x_v-c\|_\infty\ge 2+(m-2)/2=m/2+1.
\end{eqnarray*}
It follows that the $\ell_\infty$-median cost of the set $X_S$ is $m/2$ if the set $S$ is independent and is at least $m/2+1$ otherwise.
Thus, the Independent Set problem is reduced to $\ell_\infty$-Median with outliers and we obtain the following theorem:

\begin{theorem}
The $\ell_\infty$-Median problem with outliers is strongly NP-hard, does not admit approximation schemes FPTAS unless P$=$NP, and is W$[1]$-hard with respect to the parameter~$m$.
\end{theorem}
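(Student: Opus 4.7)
\medskip
\noindent\textbf{Proof proposal.}
Essentially all the combinatorial work is already carried out in the discussion immediately preceding the theorem: the mapping $v \mapsto x_v$ is a polynomial-time construction producing points with coordinates in $\{-1,0,1\}$ in dimension $|E|$, and the two displayed inequalities establish that the optimal $\ell_\infty$-median cost of $X_S$ over an $m$-element $S$ equals $m/2$ if $S$ is independent in $G$ and is at least $m/2 + 1$ otherwise. The plan is to package these observations into the three conclusions by mimicking the endgame of the proof of Theorem~\ref{th1}.

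First, I would invoke correctness of the reduction to transfer strong NP-hardness: $G$ has an independent set of size $m$ if and only if the $\ell_\infty$-Median instance $(X, m)$ has optimum at most $m/2$. Since the input coordinates are $\pm 1$ and the output size is polynomial in $|V|$, this is a strongly polynomial reduction from Independent Set, yielding strong NP-hardness directly.

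Next, for the non-existence of an FPTAS, I would exploit the multiplicative yes/no gap $(m/2+1)/(m/2) = 1 + 2/m \ge 1 + 2/n$. A hypothetical FPTAS run with $\varepsilon = 1/(n+1)$ would produce in polynomial time a value that distinguishes the two cases and thus decides Independent Set, contradicting P$\ne$NP. This is exactly the argument used at the end of Theorem~\ref{th1}, only with a cleaner constant gap (no Taylor-type error terms to control).

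Finally, for W$[1]$-hardness I would verify that the reduction is a parameterized reduction: the target parameter $m$ coincides with the source parameter (the requested independent-set size), and the construction runs in time polynomial in $|V|$ with output of size polynomial in $|V|$. Since Independent Set parameterized by solution size is W$[1]$-hard~\cite{Downey}, the conclusion transfers. I do not foresee any real obstacle --- unlike the Euclidean case, the clean combinatorial gap here reduces the argument to bookkeeping; the only step that needs even slight care is noting that the universal bound $\|x_v - c\|_\infty + \|x_u - c\|_\infty \ge \|x_v - x_u\|_\infty$ used to derive the $m/2+1$ lower bound is valid for \emph{every} center $c \in \mathbb R^E$, which is precisely what the preceding telescoping calculation exploits.
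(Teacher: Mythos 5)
Your proposal is correct and follows essentially the same route as the paper: the paper also places all the combinatorial work (the $m/2$ versus $m/2+1$ gap, valid for every center $c$ by the triangle inequality) in the discussion preceding the theorem, and then derives strong NP-hardness, FPTAS non-existence via the $1+2/m$ multiplicative gap, and W$[1]$-hardness exactly as in the endgame of Theorem~\ref{th1}. No gaps.
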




\begin{thebibliography}{99}

\bibitem{AHV2005}
Agarwal P.K., Har-Peled S., Varadarajan K.R.
Geometric approximation via coresets~//
Combinatorial and Computational Geometry, MSRI Publications 52, 1--30 (2005) \url{http://library.msri.org/books/Book52/files/01agar.pdf}










\bibitem{Charikar}
Charikar M., Khuller S., Mount D.M., Narasimhan G.
Algorithms for facility location problems with outliers~//
Proc. 12th ACM-SIAM Symposium on Discrete Algorithms (SODA~2001), 642--651 (2001) \url{https://dl.acm.org/doi/10.5555/365411.365555}

\bibitem{Chen2008}
Chen K.
A constant factor approximation algorithm for $k$-median clustering with outliers~//
Proc. 19th ACM-SIAM Symposium on Discrete Algorithms (SODA~2008), 826--835 (2008) \url{https://dl.acm.org/doi/10.5555/1347082.1347173}


\bibitem{Cohen}
Cohen M.B., Lee Y.T., Miller G., Pachocki J., Sidford A.
Geometric median in nearly linear time~//
arXiv:1606.05225 [cs.DS] (2016) \url{https://arxiv.org/abs/1606.05225}

\bibitem{CFS}
Cohen-Addad V., Feldmann A.E., Saulpic D.
Near-linear time approximations schemes for clustering in doubling metrics~//
Proc. 60th Symposium on Foundations of Computer Science (FOCS~2019), 540--559 (2019) \url{https://doi.org/10.1109/FOCS.2019.00041}




\bibitem{Downey}
Downey R.G., Fellows M.R.
Fundamentals of Parameterized Complexity.
Texts in Computer Science.
Springer-Verlag, London (2013) \url{https://doi.org/10.1007/978-1-4471-5559-1}

\bibitem{ERS}
Edelsbrunner H., O'Rourke J., Seidel R.
Constructing arrangements of lines and hyperplanes with applications~//
SIAM J. Computing 15(2), 341--363 (1986) \url{https://doi.org/10.1137/0215024}

\bibitem{ESS}
Edelsbrunner H., Seidel R., Sharir M.
On the zone theorem for hyperplane arrangements~//
SIAM J. Computing 22(2), 418--429 (1993) \url{https://doi.org/10.1137/0222031}

\bibitem{ElGindy}
ElGindy H., Keil J.M.
Efficient algorithms for the capacitated $1$-median problem~//
ORSA J. Computing 4(4), 418--425 (1992) \url{https://doi.org/10.1287/ijoc.4.4.418}



\bibitem{Feige}
Feige U., Karpinski M., Langberg M.
A note on approximating max-bisection on regular graphs~//
Inf. Proc. Letters 79(4), 181--188 (2001) \url{https://doi.org/10.1016/S0020-0190(00)00189-7}






\bibitem{Inaba}
Inaba M., Katoh N., Imai H.
Applications of weighted Voronoi diagrams and randomization to variance-based $k$-clustering~//
Proc. 10th ACM Symposium on Computational Geometry, 332--339 (1994) \url{https://doi.org/10.1145/177424.178042}


\bibitem{KP}
Kel'manov A.V., Pyatkin A.V.
NP-completeness of some problems of choosing a vector subset~//
J. Appl. Industr. Math. 5(3), 352--357 (2011) \url{https://doi.org/10.1134/S1990478911030069}


\bibitem{KLS}
Krishnaswamy R., Li S., Sandeep S.
Constant approximation for $k$-median and $k$-means with outliers via iterative rounding~//
Proc. 50th ACM Symposium on Theory of Computing (STOC~2018), 646--659 (2018) \url{https://doi.org/10.1145/3188745.3188882}






\bibitem{Shen2012}
Shenmaier V.V.
An approximation scheme for a problem of search for a vector subset~//
J. Appl. Industr. Math. 6(3), 381--386 (2012) \url{https://doi.org/10.1134/S1990478912030131}

\bibitem{Shen2013}
Shenmaier V.V.
The problem of a minimal ball enclosing $k$ points~//
J. Appl. Industr. Math. 7(3), 444--448 (2013) \url{https://doi.org/10.1134/S1990478913030186}

\bibitem{Shen2015}
Shenmaier V.V.
Complexity and approximation of the smallest $k$-enclosing ball problem~//
European J. Comb. 48, 81--87 (2015) \url{https://doi.org/10.1016/j.ejc.2015.02.011}

\bibitem{Shen2019}
Shenmaier V.V.
A structural theorem for center-based clustering in high-dimensional Euclidean space~//
Proc. 5th Conference on Machine Learning, Optimization, and Data Science (LOD~2019), LNCS~11943, 284--295 (2019) \url{https://doi.org/10.1007/978-3-030-37599-7_24}

\bibitem{Shen2020}
Shenmaier V.V.
Some estimates on the discretization of geometric center-based problems in high dimensions~//
Proc. 19th Conf. Mathematical Optimization Theory and Operations Research (MOTOR 2020), CCIS~1275, 88--101 (2020) \url{https://doi.org/10.1007/978-3-030-58657-7_10}

\bibitem{Shen2021d}
Shenmaier V.V.
Polynomial approximate discretization of geometric centers in high-dimensional Euclidean space~//
Adv. Data Anal. Classif. \url{https://doi.org/10.1007/s11634-021-00481-4}
\end{thebibliography}
\end{document}